\newcommand{\overbar}[1]{\mkern 3.5mu\overline{\mkern-3.5mu#1\mkern-3.5mu}\mkern 3.5mu}
\newcommand{\delete}{\backslash}
\newcommand{\contract}{/}
\newcommand*{\centerfloat}{%
	\parindent \z@
	\leftskip \z@ \@plus 1fil \@minus \textwidth
	\rightskip\leftskip
	\parfillskip \z@skip}
\newcommand{\CZ}{\mathrm{CZ}}
\newtheorem{theorem}{Theorem}
\newtheorem{lemma}{Lemma}
\newtheorem{corollary}{Corollary}
\newtheorem{definition}{Definition}
\newtheorem{prop}{Proposition}
\newcommand{\kg}[1]{{\textcolor{black}{#1}}}
\begin{document}
	\title{Bipartite entanglement of noisy stabilizer states through the lens of stabilizer codes} 
	
	% %%% Single author, or several authors with same affiliation:
	% \author{%
		%  \IEEEauthorblockN{Andrew R.~Barron}
		%  \IEEEauthorblockA{Department of Statistics and Data Science\\
			%                    Yale University\\
			%                    New Haven, CT, USA\\
			%                    Email: andrew.barron@yale.edu}
		% }

	%%% Several authors with up to three affiliations:
	% \author{%
	% 	\IEEEauthorblockN{Andrew R.~Barron}
	% 	\IEEEauthorblockA{Department of Statistics and Data Science\\
	% 		Yale University\\
	% 		New Haven, CT, USA\\
	% 		Email: andrew.barron@yale.edu}
	% 	\and
	% 	\IEEEauthorblockN{Claude E.~Shannon and David Slepian}
	% 	\IEEEauthorblockA{Bell Telephone Laboratories, Inc.\\ 
	% 		Murray Hill, NJ, USA\\
	% 		Email: \{csh, dsl\}@bell-labs.com}
	% }

	\author{\IEEEauthorblockN{Kenneth Goodenough}
		\IEEEauthorblockA{Computer Science\\ UMass Amherst}
  \and
  		\IEEEauthorblockN{Aqil Sajjad}
		\IEEEauthorblockA{Optical Sciences\\ Univ. of Arizona}
  \and
  		\IEEEauthorblockN{Eneet Kaur}
		\IEEEauthorblockA{Optical Sciences\\ Univ. of Arizona}
  \IEEEauthorblockA{Cisco Quantum Lab}

  \and
		\IEEEauthorblockN{Saikat Guha}
		\IEEEauthorblockA{Optical Sciences\\ Univ. of Arizona}
  \and
		\IEEEauthorblockN{Don Towsley}
		\IEEEauthorblockA{Computer Science\\ UMass Amherst}
	}

	%%% Many authors with many affiliations:
	% \author{%
		%   \IEEEauthorblockN{Andrew R.~Barron\IEEEauthorrefmark{1},
			%                     Claude E.~Shannon\IEEEauthorrefmark{2},
			%                     David Slepian\IEEEauthorrefmark{2},
			%                     and Jacob Ziv\IEEEauthorrefmark{2}\IEEEauthorrefmark{3}}
		%   \IEEEauthorblockA{\IEEEauthorrefmark{1}%
			%                    Department of Statistics and Data Science, Yale University, New Haven, CT, USA,
			%                     andrew.barron@yale.edu}
		%   \IEEEauthorblockA{\IEEEauthorrefmark{2}%
			%                     Bell Telephone Laboratories, Inc.,
			%                     Murray Hill, NJ, USA,
			%                     \{csh,dsl,jz\}@bell-labs.com}
		%   \IEEEauthorblockA{\IEEEauthorrefmark{3}%
			%                     Department of Electrical Engineering, Technion---Institute of Technology, Haifa, Israel,
			%                     jz@ee.technion.ac.il}
		% }

	\maketitle	
	\begin{abstract}
		Stabilizer states are a prime resource for a number of applications in quantum information science, such as secret-sharing and measurement-based quantum computation. This motivates us to study the entanglement of stabilizer states across a bipartition, in the presence of Pauli noise. We show that the spectra of the corresponding reduced states can be expressed in terms of properties of an associated stabilizer code. In particular, this allows us to show that the coherent information is related to the so-called syndrome entropy of the underlying code. We use this viewpoint to find stabilizer states that are resilient against noise, allowing for more robust entanglement distribution in near-term quantum networks. We specialize our results to the case of graph states, where the found connections with stabilizer codes reduces back to classical linear codes for dephasing noise. On our way we provide an alternative proof of the fact that every qubit stabilizer code is equivalent up to single-qubit Clifford gates to a graph code.
	\end{abstract}
	
	\section{Introduction}
	Entanglement forms the core principle behind many applications of quantum communication. The entanglement of stabilizer states has been found to be a resource vital for sensing, secret-sharing and computation~\cite{PhysRevA.52.R2493,bartolucci2021fusionbased,Kitaev2003,Briegel,RevModPhys.83.33,Azuma2015, Damian2020}. Distributing such states is a non-trivial task, and inevitably leads to noise being imparted unto the state. In this work, we are concerned with finding stabilizer states whose entanglement is robust against noise. As a quantifier of the entanglement, we use the coherent information, which is a lower bound on the distillable entanglement~\cite{schumacher1996quantum}.
	
	To find robust states, we first need to be able to describe a stabilizer state after noise has been applied. For Pauli noise, we show that the total state and its marginals are described in terms of stabilizer codes associated with the state. We then specialize our results to uniform depolarizing/dephasing noise, to phrase the resultant spectra in terms of weight enumerators. As we motivate and confirm numerically, the robustness of the state is directly linked to the underlying stabilizer codes.

\begin{theorem}\label{thm:main_informal1}
(Informal) The coherent information (with respect to a bipartition $A, B$ ) of a stabilizer state $\rho_{AB}$ that has undergone noise on each qubit is given by
\begin{align}
\hat{S}(\rho_{B}) - \hat{S}(\rho_{AB}) = k+S(\mathbf{s}_{B})-S(\mathbf{s}_{AB}) \ , 
\end{align}
where $\hat{S}\left(\cdot \right)$ indicates the von Neumann entropy, and $S(\mathbf{s}_{B})$ and $S(\mathbf{s}_{AB})$ are \emph{syndrome entropies}~\cite{gallager1962low, grover2007upper, wadayama2006ensemble, matas2016analysis, li2004slepian} of two stabilizer codes (with parameters $\left[n+k, 0, d'\right]$ and $\left[n, k, d\right]$) associated with $\rho_{AB}$. 
\end{theorem}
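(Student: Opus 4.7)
The plan is to view the coherent information as a difference of two von Neumann entropies and to show that each term separately is controlled by the syndrome statistics of a naturally associated stabilizer code. Specifically, I will take $|\psi\rangle_{AB}$ to be the $(n+k)$-qubit stabilizer state with stabilizer group $\mathcal{S}$, viewed as a $[[n+k,0,d']]$ code, and I will argue that the reduced state $\rho_B = \operatorname{Tr}_A |\psi\rangle\langle \psi|$ is itself a uniformly mixed codestate of a $[[n,k,d]]$ code whose stabilizer $\mathcal{S}_B$ is the projection onto $B$ of those elements of $\mathcal{S}$ that act as identity on $A$. This is the standard Schmidt-style fact about bipartite splits of stabilizer states and is where the parameter $k$ enters: it is the number of logical qubits of the reduced code, i.e., $k = |B| - \log_2(|\mathcal{S}_B|)$.

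Next I would analyze how Pauli noise transforms such stabilizer codestates. The key observation is that for any stabilizer code with projector $\Pi$ and codestate $\rho_{\mathrm{code}} = \Pi/2^k$, a Pauli $P$ conjugates $\Pi$ into the projector $\Pi_{\mathbf{s}(P)}$ onto the syndrome subspace labeled by $\mathbf{s}(P)$, since $P s P^\dagger = \pm s$ with signs determined by commutation with the generators. Hence $\mathcal{N}(\rho_{\mathrm{code}}) = \sum_{\mathbf{s}} p(\mathbf{s})\, \Pi_{\mathbf{s}}/2^k$, where $p(\mathbf{s}) = \sum_{P : \mathbf{s}(P)=\mathbf{s}} p(P)$ is the induced syndrome distribution. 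Because the $\Pi_{\mathbf{s}}$ are mutually orthogonal and each has rank $2^k$, the spectrum of $\mathcal{N}(\rho_{\mathrm{code}})$ consists of eigenvalues $p(\mathbf{s})/2^k$ each with multiplicity $2^k$, giving
\begin{equation}
\hat{S}(\mathcal{N}(\rho_{\mathrm{code}})) = S(\mathbf{s}) + k.
\end{equation}

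Applying this lemma twice finishes the argument. For $\rho_{AB}$, the associated code is the zero-logical $[[n+k,0,d']]$ code, so $\hat{S}(\mathcal{N}(\rho_{AB})) = S(\mathbf{s}_{AB})$. For $\rho_B$, the associated code has $k$ logicals and effective length $n = |B|$, so $\hat{S}(\mathcal{N}(\rho_B)) = S(\mathbf{s}_B) + k$, with the syndrome distribution induced by the marginal Pauli noise on $B$. Subtracting yields exactly the stated identity.

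I expect the main obstacle to be the careful identification of the reduced-state code and its syndrome map: one must verify that restricting the subgroup of $\mathcal{S}$ supported on $B$ really yields an abelian group of independent Pauli generators on $B$ with no $-I$, and that the syndromes produced by Pauli noise on $B$ are the projection onto $B$ of the full syndrome vector for $\mathcal{S}$. Once this bookkeeping is done, the eigenvalue counting above is essentially a consequence of the fact that the code projector and its Pauli conjugates form an orthogonal family of equal-rank projectors, and the theorem follows immediately.
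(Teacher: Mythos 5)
Your proposal is correct and follows essentially the same route as the paper: identify $\rho_B$ as the code projector of the contracted stabilizer group, observe that Pauli conjugation permutes the mutually orthogonal, equal-rank conjugate-code (syndrome) subspaces so that the noisy state's entropy is $k + S(\mathbf{s})$, and subtract the two contributions. The paper phrases syndromes as cosets of $H^\perp$ rather than via an explicit syndrome map, but that is only a difference of language, not of substance.
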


\kg{We clarify in the main text how to construct the two associated stabilizer codes.} We show that the above syndrome entropies can be expressed in terms of so-called weight enumerator polynomials for uniform depolarizing noise.  

We also consider the case where only Bob's qubits experience noise, which corresponds to the setting of Alice transmitting one half of her state through a tensor product of quantum channels.

\begin{theorem}\label{thm:main_informal3}
(Informal) The coherent information of a noisy stabilizer state $\rho_{AB}$ where only Bob's qubits have experienced noise is given by
\begin{align}
\hat{S}(\rho_{B}) - \hat{S}(\rho_{AB}) = k+S(\mathbf{s}_{B})-S(\mathbf{s}_{B}^\perp) \ , 
\end{align}
where $S(\mathbf{s}_{B}^\perp)$ is the syndrome entropy of the dual of the code associated to $\mathbf{s}_B$.
\end{theorem}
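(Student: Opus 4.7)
The strategy is to compute the spectra of $\rho_B$ and $\rho_{AB}$ separately from the stabilizer structure and match them to the claimed syndrome entropies. First, observe that $\rho_B$ is identical to its value in the setting of Theorem~\ref{thm:main_informal1}: since $\mathcal{N}_A$ is trace-preserving, $\mathrm{Tr}_A (\mathcal{N}_A \otimes \mathcal{N}_B)(|\psi\rangle\langle\psi|) = \mathcal{N}_B(\mathrm{Tr}_A |\psi\rangle\langle\psi|)$, which is the same expression as when $\mathcal{N}_A$ is absent. The identity $\hat{S}(\rho_B) = k + S(\mathbf{s}_B)$ established there therefore carries over verbatim.

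For $\rho_{AB}$, I would expand $\rho_{AB} = \sum_P p_P (I_A \otimes P)|\psi\rangle\langle\psi|(I_A \otimes P^\dagger)$ and group terms by cosets of $\mathcal{S}_B$ in $\mathcal{P}_n$, where $\mathcal{S}_B \subset \mathcal{P}_n$ is the $B$-local part of the stabilizer of $|\psi\rangle$ (equivalently, the stabilizer group of $\rho_B$, of size $2^{n-k}$). Since $(I \otimes s)|\psi\rangle = |\psi\rangle$ for every $s \in \mathcal{S}_B$, the projector $(I \otimes P)|\psi\rangle\langle\psi|(I \otimes P^\dagger)$ depends only on the coset $[P] \in \mathcal{P}_n/\mathcal{S}_B$, so the mixture collapses to $\rho_{AB} = \sum_{[P]} \tilde{q}_{[P]}\,|\psi_{[P]}\rangle\langle\psi_{[P]}|$ with coset weights $\tilde{q}_{[P]} = \sum_{P' \in [P]} p_{P'}$.

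Next, distinct cosets yield orthogonal states. Since $\langle\psi_{[P]}|\psi_{[P']}\rangle = \mathrm{Tr}((P^\dagger P')\rho_B)$ and $\rho_B = 2^{-k}\Pi_B$ with $\Pi_B = 2^{-(n-k)}\sum_{s \in \mathcal{S}_B} s$, the standard Pauli trace identity $\mathrm{Tr}(Q) \propto \delta_{Q,I}$ reduces this inner product to $\mathbb{1}[P^\dagger P' \in \mathcal{S}_B]$, which vanishes whenever $[P] \neq [P']$. The coset decomposition is therefore already the spectral decomposition of $\rho_{AB}$, giving $\hat{S}(\rho_{AB}) = H(\tilde{q})$.

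It then remains to identify $H(\tilde{q})$ with $S(\mathbf{s}_B^\perp)$ via symplectic duality. Viewing $\mathcal{S}_B$ as an isotropic subspace $V \subseteq \mathbb{F}_2^{2n}$ of dimension $n-k$, its symplectic dual $V^\perp$ has dimension $n+k$, and the commutation pattern of a Pauli with a basis of $V^\perp$ defines an $(n+k)$-bit ``dual syndrome'' that is constant on, and separates, cosets of $(V^\perp)^\perp = V = \mathcal{S}_B$. Hence cosets of $\mathcal{S}_B$ are in bijection with the realized dual syndrome values, $\tilde{q}$ equals the induced distribution $\mathbf{s}_B^\perp$, and combining with Step~1 yields the claim. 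The main obstacle is this last identification: because the symplectic dual of a stabilizer code is not itself isotropic, $S(\mathbf{s}_B^\perp)$ has to be defined as the entropy of a commutation-vector distribution rather than as the syndrome of a bona fide stabilizer code, and aligning this definition with the convention fixed in the main text is the only genuinely non-mechanical step.
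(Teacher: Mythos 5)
Your proof is correct and follows essentially the same route as the paper: the marginal $\rho_B$ is unchanged from the all-noise setting, and $\hat{S}(\rho_{AB})$ reduces to the entropy of the distribution over cosets of the $B$-supported stabilizer subgroup $\mathcal{S}_B = \overbar{H}\contract A$ in $\mathcal{P}_n$, which is exactly the paper's $S(\mathbf{s}_{\overbar{H}\contract A}) = S(\mathbf{s}_B^\perp)$. The only differences are presentational: you verify orthogonality of distinct cosets explicitly via the Pauli trace identity where the paper appeals to its earlier lemma on conjugate code projectors, and your closing worry about the dual code being non-isotropic is already resolved by the paper's abstract coset definition of syndromes.
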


	We further specialize Theorem \ref{thm:main_informal1} to graph states \kg{(which are stabilizer states that are constructed by applying controlled-$Z$ ($\CZ$)} gates to qubits in the $\ket{+}$ state), where we show how to deal with dephasing and depolarizing noise applied both before and after application of the $\CZ$ gates. For dephasing, we find a link between the robustness of the state and classical codes.

\begin{theorem}\label{thm:main_informal2}
(Informal) The coherent information of a graph state $\rho_{AB}$ that has undergone uniform dephasing with parameter $p$ with respect to a bipartition $A, B$ is given by
\begin{align}
\hat{S}(\rho_{B}) - \hat{S}(\rho_{AB}) = k+S(\mathbf{s}_{B})-\left(n+k\right)\overbar{S}\left(p\right) \ , \nonumber
\end{align}
 where $n+k$ equals the total number of qubits, $S(\cdot)$ is the syndrome entropy of a classical code whose generator matrix is given by the biadjacency matrix of the bipartite graph consisting of edges between $A$ and $B$, and $\overbar{S}\left(\cdot \right)$ is the binary entropy.
\end{theorem}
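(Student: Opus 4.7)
The plan is to invoke Theorem~\ref{thm:main_informal1} with the stabilizer state specialized to the graph state at hand (writing $|A|=k$, $|B|=n$) and the Pauli noise specialized to i.i.d.\ $Z$-dephasing of parameter~$p$, and then to rewrite the two syndrome entropies on the right-hand side in graph-theoretic language. The $S(\mathbf{s}_{AB})$ term is essentially immediate: a graph state is stabilized by $g_v = X_v \prod_{u\sim v} Z_u$, and under dephasing only the factor $X_v$ can anticommute with a $Z$-error on qubit $w$, so the syndrome vector produced by an error pattern $\mathbf{e}\in\mathbb{F}_2^{n+k}$ equals $\mathbf{e}$ itself. Hence the syndrome inherits the full $\mathrm{Bernoulli}(p)^{\otimes(n+k)}$ distribution and $S(\mathbf{s}_{AB}) = (n+k)\,\overbar{S}(p)$, matching the third term on the right-hand side of the theorem.

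For $S(\mathbf{s}_B)$, I would first identify the subgroup $S_B$ of the full stabilizer whose elements act as the identity on $A$, and then analyze its action on $B$. A product $\prod_{v\in T} g_v$ is trivial on $A$ iff (i)~$T\subseteq B$, so that no $X_v$ lands on $A$, and (ii)~every $a\in A$ has an even number of neighbours in $T$, so that the induced $Z$'s on $A$ cancel. Condition~(ii) is precisely $M\mathbf{1}_T = \mathbf{0}$, where $M$ is the biadjacency matrix of the $A$--$B$ cut. Thus the $X$-supports of the elements of $S_B$, read as vectors in $\mathbb{F}_2^n$, trace out exactly $\ker M$; the induced $Z$-parts on $B$ commute with every dephasing error and so contribute nothing to the syndrome, and errors on $A$ are invisible to $S_B$ by construction. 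It follows that the syndrome of a $Z$-error $\mathbf{e}_B\in\mathbb{F}_2^n$ is the tuple of inner products $(\mathbf{s}\cdot\mathbf{e}_B)_{\mathbf{s}\in\ker M}$; two errors give the same syndrome iff they differ by a vector in $(\ker M)^{\perp} = \mathrm{row}(M)$. Consequently $S(\mathbf{s}_B)$ equals the entropy of an i.i.d.\ $\mathrm{Bernoulli}(p)$ vector modulo $\mathrm{row}(M)$, which is by definition the syndrome entropy of the classical code with generator matrix~$M$. Substituting both evaluations into Theorem~\ref{thm:main_informal1} yields the claimed identity.

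The main obstacle is the bookkeeping in the second step: one must check that the marginal stabilizer $S_B$ is parameterized cleanly by $\ker M$ (with no extra cosets arising from the choice of generators), and that the nuisance $Z$-parts on $B$---which do depend on the intra-$B$ edges of the graph---genuinely drop out of the syndrome distribution under $Z$-dephasing. Both facts hinge on the trivial observation that $Z$ commutes with $Z$, but they are exactly what ensures that the final answer involves only the bipartite biadjacency matrix~$M$ and not the rest of the graph structure.
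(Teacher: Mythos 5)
Your proof is correct and follows essentially the same route as the paper: both evaluate $\hat{S}(\rho_{AB})$ as $(n+k)\overbar{S}(p)$ (you via the observation that the syndrome map is the identity on $Z$-error patterns, the paper via commuting the $\CZ$ gates past the $Z$ errors), and both reduce $S(\mathbf{s}_B)$ to the syndrome entropy of the classical code generated by the biadjacency matrix, with the intra-block edges dropping out because $Z$ commutes with $Z$. The only cosmetic difference is that you parameterize the surviving stabilizer $\overbar{H}\contract A$ by $\ker G_{AB}$ and then dualize to $\mathrm{row}(G_{AB})$, whereas the paper reads $\mathrm{row}(G_{AB})$ off directly from the deleted tableau $\overbar{H}\delete A$; these are the two sides of the contraction--deletion duality in Eqs.~\eqref{eq:dual1} and~\eqref{eq:dual2}.
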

Theorem \ref{thm:main_informal2} straightforwardly generalizes to the case where only Bob's qubits experience decoherence.

 \kg{The above two results aid in calculating the coherent information of noisy stabilizer states, which is more tractable than the naive approach of diagonalizing the density matrices directly. Furthermore, since the coherent information can be phrased entirely in terms of properties of the associated code, a first intuition would be that graphs with good associated codes provide robustness against noise, at least in the low noise regime. We confirm this intuition numerically in the main text.}

\kg{Our results follow from interpreting stabilizer codes as arising from \emph{contraction} and \emph{deletion} operations on stabilizer states. We use this perspective to give alternative proofs of 1) that every stabilizer code is equivalent to a graph code, 2) a graph-theoretic description of equivalent graph codes. }
 
	\section{Preliminaries}
	We use most of the standard notation used for graphs, classical codes, stabilizer states and graph states. The set of all Pauli strings on $n$ qubits is denoted by $\mathcal{P}_n$.
 A commutative subgroup $H$ of $\mathcal{P}_n$ that does not contain $-I$ is called a stabilizer group. Associated with each such $H$ with $|H|= 2^{n-k}$ is an $\left[n, k, d\right]$ code. That is, it is the unique subspace that is point-wise fixed by each $h\in H$. We use the term code to refer to both subspaces and their associated stabilizer groups. In this paper, we ignore phases of Pauli strings. \kg{Each stabilizer group also has a group of logical operators, $H_{\textrm{logical}}$. We use the convention that this group consists of all elements in $\mathcal{P}_n$ which commute with all elements in $H$.} We will use the overline notation to indicate that a stabilizer group corresponds to a stabilizer state.

\kg{Given a generating set of a stabilizer group $H$, a syndrome is often defined as an observed measurement pattern when measuring the elements from the generating set. Note that this definition depends on an arbitrary choice of the generating set. A more abstract definition exists for syndromes that does not depend on such a choice. Namely, syndromes are identified as the cosets $P\cdot H_{\textrm{logical}}$ for $P\in \mathcal{P}_n$. The first definition can be recovered by noting that two elements are in the same coset iff they have the same (anti-)commutation relations with any generating set.}  

 A graph state can be defined as the state reached after applying $\CZ$ gates according to the edges of a graph $G$ to $\ket{+}^{\otimes n}$. Since $\CZ$ gates commute, this is well-defined. \kg{A graph state can also be defined by a natural generating set of its stabilizer group, which is given by $X_v \prod_{w\in N_v}Z_w$ for each vertex $v\in V$~\cite{hein2006entanglement}.}

	Each element $P\in\mathcal{P}_n$ has a \emph{weight} $\textrm{wt}\left(P\right)$, i.e.~$\textrm{wt}\left(P\right)$ counts the number of non-identity elements in $P$. For an arbitrary subset $K\subseteq \mathcal{P}_n$, define $\mathcal{W}_w\hspace{-0.5mm}\left(K\right)$ to be the number of elements in $K$ with weight equal to $w$. The \emph{weight enumerator polynomial of $K$} is defined as $\mathcal{W}\hspace{-0.5mm}\left(K, x, y\right) = \sum_{w=0}^n \mathcal{W}_w\hspace{-0.5mm}\left(K\right)x^{n-w}y^w$~\cite{gottesman1997stabilizer}.

	The depolarizing channel on a single qubit is given by $D(\cdot) = \lambda\left(\cdot \right)+\left(1-\lambda\right)\textrm{Tr}\left(\cdot\right) \frac{I}{2}$ for some $\lambda\in \left[0, 1\right]$\footnote{Although we do not consider it here, technically $\lambda$ can take values outside of this range while still leading to a valid quantum channel~\cite{king2003capacity}.}. Alternatively, with probability $\lambda$ no noise is applied, while with probability $1-\lambda$ a single-qubit Pauli $I, X, Y, Z$ is applied with uniform probability. Generalizing to uniform depolarizing noise on $n$ qubits, we find that a Pauli string $P$ with weight $\textrm{wt}(P)$ is applied with probability $\left(\lambda+\frac{1-\lambda}{4}\right)^{n-\textrm{wt}\left(P\right)}\left(\frac{1-\lambda}{4}\right)^{\textrm{wt}\left(P\right)} = \frac{1}{4^n}\left(1+3\lambda\right)^{n-\textrm{wt}\left(P\right)}\left(1-\lambda\right)^{\textrm{wt}\left(P\right)} $.
		
	We are interested in bipartite entanglement between Alice and Bob, where we assume that Alice holds $k$ qubits and Bob holds $n$ qubits. This convention is convenient for relating properties to $\left[n, k, d\right]$ codes. Finally, we use the shorthand notation $U(A)\equiv UAU^\dagger$.
	
\subsection{Contraction and deletion}
        In this section we present two important operations, \emph{contraction} and \emph{deletion}.
        
	\begin{definition}\label{def:deletion}
		The deletion of a subset of qubits $T$ of a stabilizer group $H$ yields a new group $H\delete T$. The group $H\delete T$ has the same elements as $H$, but with the entries corresponding to $T$ removed. We also use $h\delete T$ when deleting entries only from a single element $h\in H$.
	\end{definition}
	
	\begin{definition}\label{def:contraction}
		The contraction of a subset of qubits $T$ of a stabilizer group $H$ yields a new group $H\contract T$. The group $H\contract T$ consists of those elements of $H$ that do not have support on $T$. Furthermore, the entries corresponding to $T$ are removed.
	\end{definition}

We show an example of the contraction and deletion operations for the complete graph state on three vertices in Fig.~\ref{fig:contraction_deletion_example}. Note that we use the shorthand $H\delete t$ ($H\contract t$) when deleting (contracting) a singleton set $\lbrace{t\rbrace}$.

	Deletions and contractions are closely connected through the \emph{dual} operation.
	
	\begin{definition}
		The dual of a subgroup $H$ of $\mathcal{P}_n$ (which need not be a stabilizer group) is the largest subgroup $H^\perp$ of $\mathcal{P}_n$ that commutes elementwise with all elements of $H$.
	\end{definition}
	
	The dual of the stabilizer group of a code is exactly the group of logical operators of that code\kg{, i.e. $H^\perp = H_{\textrm{logical}}$}~\cite{gottesman1997stabilizer}. Furthermore, the dual operation is an involution. Contraction and deletion are related through the dual operation as follows
	
	\begin{align}
		(H\delete T)^\perp &= \left(H^\perp\right)\contract T\ ,\label{eq:dual1}\\
		\left(H\contract T\right)^\perp &= \left(H^\perp\right)\delete T \ \label{eq:dual2}.
	\end{align}

	\begin{figure}
		\centerfloat
		\def\xshift{2}
\begin{tikzpicture}
\node[circle, fill=black, draw, scale=0.6] (v2) at (-1-1,+1-2){};
\node[circle, fill=black, draw, scale=0.6] (v3) at (+1-1, +1-2){};

\node[diamond, fill=black, draw, scale=0.6] (v1) at (0.0-1, 2.5-2){};
\node[scale=1] (1) at (0.3-1, 2.8-2){$1$};
\node[scale=1] (2) at (-1-0.3-1, +1+0.3-2){$2$};
\node[scale=1] (3) at (+1+0.3-1, +1+0.3-2){$3$};

\draw[line width = 0.3mm] (v1) -- (v2) -- (v3) -- (v1);

\node at (3, 1.7) [rectangle,minimum size=2cm] {$\overline{H} = $ \begin{tabular}{|c c|}
    \hline
    $III$ & $YYI$ \\
    $XZZ$ & $YIY$\\
    $ZXZ$ & $IYY$\\
    $ZZX$ & $XXX$\\
    \hline
\end{tabular}};

\node at (3.5, -0.5) [rectangle,minimum size=2cm] {$\overline{H}\backslash 1 = $ \begin{tabular}{|c c|}
    \hline
    $I_L = II$ & $Y_L=YI$ \\
    $X_L = ZZ$ & $Y_L =IY$\\
    $Z_L = XZ$ & $I_L = YY$\\
    $Z_L = ZX$ & $X_L = XX$\\
    \hline
\end{tabular}};

\node at (2.5, -2) [rectangle,minimum size=2cm] {$\overline{H}/1 = $ \begin{tabular}{|c c|}
    \hline
    $II$ & $YY$ \\
    \hline
\end{tabular}};

% \node at (3, 2) [rectangle,minimum size=2cm] {\begin{tabular}{c c}
%     $II$ & $YI$ \\
%     $ZZ$ & $IY$\\
%     $ZX$ & $YY$\\
%     $ZX$ & $XX$
% \end{tabular}};
\end{tikzpicture}
		\vspace{-15mm}
		\caption{Example of the contraction and deletion operation on a single qubit of a stabilizer state. We choose the complete graph state $\ket{K_3}$, which has stabilizer group $\overbar{H} = \lbrace{III, XZZ, ZXZ, ZZX, YYI, YIY, IYY, XXX\rbrace}$. Contracting the first qubit yields the code $\overbar{H}\contract 1=\lbrace{II, YY \rbrace}$, while deletion yields $\overbar{H}\delete 1 = \lbrace{II, ZZ, XZ, ZX, YI, IY, YY, XX\rbrace}$. Note that $\overbar{H}\delete 1$ is the group of logical operators of the stabilizer code $\overbar{H}\contract 1$. Furthermore, the \emph{type} of logical operator of an element $h\delete 1$ is given by $h\delete \lbrace 2, 3\rbrace$, e.g.~$ZZX\delete 1 = ZX$ implements a logical $Z$ since $ZZX\delete \lbrace 2, 3\rbrace = Z$.}
  \label{fig:contraction_deletion_example}
	\end{figure}

	The contraction and deletion perspectives are used to reason about marginals of stabilizer states, and especially to relate them to stabilizer codes. As an example of this perspective, note that contracting a subset of qubits of a stabilizer state yields a stabilizer code. In fact, we show that every stabilizer code arises in this fashion.

 \begin{lemma}\label{lemma:surjectivity_lemma}
 For every stabilizer group $H$ on a set of qubits $V$ there exists a stabilizer group $\overbar{H}$ of a stabilizer state on a set of qubits $V' \cup V$ such that $H = \overbar{H}\contract V'$.
 \end{lemma}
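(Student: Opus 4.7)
The plan is to construct $\overbar{H}$ explicitly as the canonical purification of $H$. Let $n = |V|$ and $|H| = 2^{n-k}$, so that $H$ carries $k$ logical qubits. I would first invoke the standard symplectic Gram--Schmidt result for stabilizer codes to pick logical operators $\bar{X}_1, \bar{Z}_1, \ldots, \bar{X}_k, \bar{Z}_k \in H^\perp$ satisfying the canonical hyperbolic commutation relations: $\bar{X}_i$ anticommutes with $\bar{Z}_i$ and commutes with every other chosen logical generator. Attaching a fresh set of $k$ ancilla qubits $V' = \{v_1', \ldots, v_k'\}$, I would take $\overbar{H}$ to be the subgroup of $\mathcal{P}_{V \cup V'}$ generated by the $n-k$ generators of $H$ (extended by $I$ on $V'$) together with $\bar{X}_i \otimes X_{v_i'}$ and $\bar{Z}_i \otimes Z_{v_i'}$ for $i = 1, \ldots, k$.

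Next I would verify that $\overbar{H}$ is the stabilizer group of a genuine stabilizer state on $V \cup V'$. Commutativity follows from three observations: the generators of $H$ commute pairwise, they commute with every $\bar{X}_i$ and $\bar{Z}_i$ by definition of $H^\perp$, and they act trivially on $V'$; while the paired generators commute with each other because the sign picked up by swapping $\bar{X}_i$ past $\bar{Z}_j$ on $V$ is exactly $(-1)^{\delta_{ij}}$, cancelled by the same sign from swapping $X_{v_i'}$ past $Z_{v_j'}$ on $V'$. Independence of the $n+k$ generators, and hence $|\overbar{H}| = 2^{n+k}$, is immediate from the disjoint single-qubit $V'$-supports of the new generators combined with the trivial $V'$-support of the old ones. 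Working up to phases (as throughout the paper), $-I$ does not arise, so $\overbar{H}$ is a valid stabilizer state.

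Computing $\overbar{H}\contract V'$ is then direct. Every $h \in \overbar{H}$ can be written as $h = g \cdot \prod_i (\bar{X}_i \otimes X_{v_i'})^{a_i} (\bar{Z}_i \otimes Z_{v_i'})^{b_i}$ with $g \in H$ (padded by $I$ on $V'$) and $a_i, b_i \in \{0,1\}$, so its restriction to $V'$ is $\prod_i X_{v_i'}^{a_i} Z_{v_i'}^{b_i}$. This restriction is identity on $V'$ if and only if all $a_i = b_i = 0$, i.e.~if and only if $h$ comes from $H$; deleting the $V'$ entries then recovers $H$ verbatim, so $\overbar{H}\contract V' = H$, as required.

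The only step demanding genuine care is the existence of the canonical symplectic basis of logical operators, but this is a classical fact about the stabilizer formalism, provable by an iterated symplectic Gram--Schmidt procedure on $H^\perp / H$, which I would simply cite. Everything else reduces to routine Pauli bookkeeping, so I do not expect any substantial obstacle.
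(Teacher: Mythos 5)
Your proposal is correct and is essentially the same as the paper's proof: the paper likewise attaches $k$ ancilla qubits and extends a hyperbolic generating set of $H^\perp$ (the $n-k$ generators of $H$ plus logical pairs $x_i, z_i$) by coupling $x_i$ to $X$ and $z_i$ to $Z$ on the $i$'th ancilla, yielding an $(n+k)$-generator commuting independent set and hence a stabilizer state. The only cosmetic difference is that the paper verifies $\overbar{H}\delete V' = H^\perp$ and then invokes the deletion/contraction duality relations together with self-duality of stabilizer states to conclude $\overbar{H}\contract V' = H$, whereas you check the contraction directly by identifying the elements with trivial $V'$-support; both verifications are routine.
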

 \begin{proof}
See Appendix~\ref{sec:surjectivity}.
 \end{proof}

Given such an $\overbar{H}$ for a stabilizer group $H=\overbar{H}\contract V'$, we find through the duality relations in Eq.~\eqref{eq:dual1} and~\eqref{eq:dual2} that the group of logical operators is given by $H^\perp = \overbar{H}\delete V'$. Furthermore, there is a canonical way to assign the \emph{type} of logical operator to the element $h\delete V' \in H\delete V'$. That is, we say that $h\delete V'$ implements the logical operator of type $h\delete (V)$, i.e.~how $h$ acts on the complement of $V'$ (see Fig.~\ref{fig:contraction_deletion_example}).

Lemma \ref{lemma:surjectivity_lemma} also allows for a straightforward proof that every stabilizer code can be thought of as a \emph{graph code} (first shown but phrased differently in \cite{schlingemann2001stabilizer, grassl2002graphs}), up to single-qubit Clifford rotations. A graph code can be defined in several different ways. One way is the operational interpretation from~\cite{hein2006entanglement}. Here, $k$ input qubits are prepared in an arbitrary state, $n$ output qubits are initialized in $\ket{+}$, and then $\CZ$ gates are performed according a given graph $G$. After the $\CZ$ gates, the input qubits are measured in the $X$ basis. This teleports the input state to the $n$ output qubits (up to some unimportant outcome-dependent Pauli corrections), effectively encoding the input state on the $n$ output qubits. This can be shown to be equivalent to implementing a code whose stabilizer group corresponds to deletion of the input qubits on the graph state $\ket{G}$~\cite{goodenough2023near}.

Lemma \ref{lemma:surjectivity_lemma} allows us to reason about stabilizer codes $H=\overbar{H}\contract V'$ in terms of the stabilizer group $\overbar{H}$ of a state, which leads to a simplified proof of the following fact.

\begin{theorem}
 Every stabilizer code is single-qubit Clifford equivalent to a graph code.
 \end{theorem}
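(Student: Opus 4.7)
The plan is to combine Lemma~\ref{lemma:surjectivity_lemma} with the well-known fact that every stabilizer state is single-qubit Clifford equivalent to a graph state. Given an arbitrary stabilizer code $H$ on a qubit set $V$, I would first invoke Lemma~\ref{lemma:surjectivity_lemma} to realize $H$ as a contraction, $H = \overbar{H}\contract V'$, where $\overbar{H}$ is the stabilizer group of some stabilizer state on $V' \cup V$. I would then apply the stabilizer-state-to-graph-state reduction to produce a product of single-qubit Cliffords $U = \bigotimes_{v \in V' \cup V} U_v$ and a graph $G$ on $V' \cup V$ such that $U(\overbar{H}) = \overbar{H_G}$.

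The crux is to verify that contraction by $V'$ commutes with such a local Clifford action in the appropriate sense. Splitting $U = U_{V'} \otimes U_V$ and decomposing an arbitrary $g \in \overbar{H}$ as $g = g_{V'} \otimes g_V$ (phases are ignored throughout), we have $U(g) = U_{V'}(g_{V'}) \otimes U_V(g_V)$. Because a single-qubit Clifford fixes the identity and sends no other Pauli to the identity, the condition ``trivial on $V'$'' is preserved under conjugation by $U_{V'}$. Hence
\[
\overbar{H_G}\contract V' \;=\; U(\overbar{H})\contract V' \;=\; U_V\bigl(\overbar{H}\contract V'\bigr) \;=\; U_V(H),
\]
which rearranges to $H = U_V^{\dagger}\bigl(\overbar{H_G}\contract V'\bigr)$. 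Since $U_V$ is a product of single-qubit Cliffords on the code qubits $V$, this exhibits $H$ as single-qubit Clifford equivalent to the graph code $\overbar{H_G}\contract V'$.

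The main obstacle is conceptual rather than technical: one must trust that the ``code-as-contraction'' viewpoint from Lemma~\ref{lemma:surjectivity_lemma} genuinely transports the stabilizer-state-to-graph-state classification all the way down to the level of codes, without implicitly introducing any multi-qubit Cliffords. Once the tensor structure of a product of single-qubit Cliffords is unpacked, the argument collapses to the transparent display above: the freedom to conjugate the input qubits in $V'$ by $U_{V'}$ is absorbed by the contraction, while the action of $U_V$ on $V$ supplies the claimed single-qubit Clifford equivalence of codes.
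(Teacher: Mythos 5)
Your proposal is correct and follows essentially the same route as the paper: realize $H$ as a contraction $\overbar{H}\contract V'$ via Lemma~\ref{lemma:surjectivity_lemma}, observe that single-qubit Cliffords on $V'$ are invisible to the contraction (so the local-Clifford freedom extends to all qubits), and then invoke the stabilizer-state-to-graph-state equivalence. Your version merely makes the key identity $U(\overbar{H})\contract V' = U_V\bigl(\overbar{H}\contract V'\bigr)$ explicit, which the paper states more tersely as the freedom to conjugate by unitaries local to $V'$.
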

 
 \begin{proof}
 The statement is equivalent to the claim that every stabilizer group $H$ is single-qubit Clifford equivalent to $\overbar{H}'\contract V'$ for some set $V'$ and $\overbar{H}'$ the stabilizer group of a graph state. From Lemma \ref{lemma:surjectivity_lemma} we know that there exists some stabilizer group $\overbar{H}$ of a stabilizer state (not necessarily a graph state) such that $\overbar{H}\contract V' = H$. By definition, we are free to perform single-qubit Clifford operations on the complement of $V'$. However, for $U$ an arbitrary unitary local to $V'$, $\overbar{H}\contract V' = U(\overbar{H})\contract V'$. This means that we are free to apply single-qubit Clifford operations to \emph{all} qubits of $\overbar{H}$. Using that every stabilizer state is single-qubit Clifford equivalent to a graph state~\cite{van2004graphical} the claim follows.
 \end{proof}

 The above is also an alternative proof of Corollary IV.6 in~\cite{goodenough2023near}.
 
We make one technical assumption for the remainder of this paper. We assume that deletions preserve the cardinality of the group. This is to ensure that the $k$ qubits on Alice's side can still be interpreted as the code parameter $k$, see Section V.A of~\cite{goodenough2023near} for a more detailed discussion.
 
	\subsection{Conjugate codes/subspaces}
	Any stabilizer code $H$ has several closely associated codes, which differ only by phases $\pm 1$ in front of the $h\in H$. We call these the conjugate codes/subspaces of $H$. We collect here several useful characterizations of these codes and their associated subspaces. First, not every combination of phases is possible. For a fixed generating set of size $n-k$ of $H$, all other conjugate codes are generated by the same generating set, but by selecting for each of the $n-k$ generators whether it has a minus sign in front of the generator or not. As an example, $\langle XX, ZZ\rangle$ has the four conjugate codes $\langle \pm XX, \pm ZZ\rangle$; this yields the well-known Bell basis states. Alternatively, the choice of $\pm 1$ for each generator can be interpreted as a homomorphism from $H$ to $\lbrace{-1, 1\rbrace}$. In fact, the different homomorphisms $H\mapsto \lbrace{-1, 1\rbrace}$ form all the possible \emph{characters} of $H$, i.e.~homomorphisms from $H$ to $\mathbb{C}$.

	Second, any two different such subspaces are orthogonal. This is because they differ in the phase of at least one stabilizer, meaning that any two states in two different conjugate subspaces can be deterministically distinguished by measuring that stabilizer. Note that this argument still holds for mixed states that are convex combinations of basis states of the respective subspaces.
	
	Since the $2^{n-k}$ conjugate subspaces (corresponding to the $2^{n-k}$ possible assignments of $\pm 1$ for each of the $n-k$ generators) are orthogonal and each has dimension $2^k$, the entire space is a direct sum of the conjugate subspaces. Now note that the operator $\sum_{h\in H}h$ acts as the identity on its associated subspace. What is more, the kernel of $\sum_{h\in H}h$ is given by the direct sum of all the other conjugate subspaces. This is because for any $\ket{\psi}$ in a different conjugate subspace $H'$, the two cases $h\ket{\psi}=\pm\ket{\psi}$ appear equally often\footnote{On a deeper level this follows from the orthogonality of irreducible characters, such that the above result also generalizes to prime qudit stabilizers.}. Summarizing, $\sum_{h\in H}h$ is the projector onto the stabilizer codespace $H$.

	We often need to refer to states that are normalized projectors into stabilizer codespaces, i.e.~$\frac{1}{\left|H\right|}\sum_{h\in H}h\equiv \rho_{H}$. We abuse terminology and refer to such states as code projectors, even though $\rho_{H}^2$ only equals $\rho_{H}$ up to a scalar.

 The following lemma details how different Pauli operators affect codespaces/code projectors.
	\begin{lemma}\label{lemma:orthogonal_supp2}
		Let $\rho_{H}$ be a code projector with associated stabilizer group $H$ and $P\in \mathcal{P}_n$. Then $P\left(\rho_H\right) \equiv P\rho_HP^\dagger = \rho_{P(H)} = \frac{1}{\left|H\right|}\sum_{h\in H}P(h)$ equals $\rho_{H}$ iff $P \in H^\perp$.
  If $P\not\in H^\perp$ then $\rho_{H}$ and $\rho_{P(H)}$ are distinct conjugate code projectors.
	\end{lemma}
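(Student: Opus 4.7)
The plan is to reduce everything to a single observation: for any two Paulis $P$ and $h$, conjugation gives $PhP^\dagger = \chi_P(h)\,h$ with $\chi_P(h) = +1$ if $P$ and $h$ commute and $\chi_P(h) = -1$ otherwise. By linearity of conjugation I would first write
\begin{align}
P(\rho_H) = \frac{1}{|H|}\sum_{h\in H} PhP^\dagger = \frac{1}{|H|}\sum_{h\in H} \chi_P(h)\, h,
\end{align}
which matches the stated expansion of $\rho_{P(H)}$ once I identify $P(H) = \{\chi_P(h)\, h : h \in H\}$.

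For the biconditional, the direction $P\in H^\perp \implies P(\rho_H)=\rho_H$ is immediate since $\chi_P \equiv 1$ on $H$. For the converse I would use that the elements of $H$ are distinct as unsigned Pauli strings (otherwise $-I$ would lie in $H$, contradicting the definition of a stabilizer group), and hence linearly independent with respect to the Hilbert--Schmidt inner product. Equality of the two sums then forces $\chi_P(h) = 1$ for every $h \in H$, i.e.~$P$ commutes with every element of $H$ and thus $P \in H^\perp$.

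For the distinctness statement, suppose $P\notin H^\perp$. I would first verify that $\chi_P\colon H \to \{\pm 1\}$ is a group homomorphism, which follows from the elementary fact that $P$ anticommutes with $h_1 h_2$ iff exactly one of the $h_i$ anticommutes with $P$. Then $P(H)$ is closed under multiplication, does not contain $-I$ (since $\chi_P(I)=+1$), and agrees with $H$ up to signs, so $P(H)$ is exactly a conjugate code of $H$ in the sense of the previous subsection, and $\rho_{P(H)}$ is the corresponding code projector. Because $\chi_P$ is a nontrivial character, at least one sign differs, giving $\rho_{P(H)} \neq \rho_H$; the orthogonality of distinct conjugate subspaces established earlier then identifies them as two distinct conjugate code projectors.

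The main obstacle here is essentially bookkeeping rather than a conceptual jump: the only step that requires care is recognizing that the sign pattern $\chi_P$ is a \emph{character} of $H$, so that $P(H)$ slots cleanly into the conjugate-code framework set up in the previous subsection without any new structure needing to be introduced.
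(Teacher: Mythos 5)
Your proposal is correct and follows essentially the same route as the paper's (much terser) proof: conjugation by $P$ flips the sign of exactly the anticommuting elements of $H$, the trivial sign pattern occurs iff $P\in H^\perp$, and a nontrivial sign pattern is a character of $H$ yielding a distinct conjugate code projector. The extra details you supply --- linear independence of distinct unsigned Pauli strings to force $\chi_P\equiv 1$ in the converse direction, and the verification that $\chi_P$ is a homomorphism so that $P(H)$ fits the conjugate-code framework --- are exactly what the paper leaves implicit.
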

 \begin{proof}
      If $h'\in H^\perp$, then by definition $h'$ commutes with each $h\in H$, so that $h'(\rho_{H})= \rho_{H}$. Otherwise, $h'$ anti-commutes with exactly half the elements of $H$. This leads to phases being added in front of the anti-commuting terms, yielding a conjugate code.
 \end{proof}
	In fact, $P(\rho_H)$ depends only on the coset of $H^\perp$ in $\mathcal{P}_n$ that $P$ belongs to, since two elements $h, h'$ are in the same coset iff they are the same up to multiplication by an element in $H^\perp$, yielding an equivalence relation on $\mathcal{P}_n$. We label the cosets of $H^\perp$ in $\mathcal{P}_n$ by $C_e$. Note that $H^\perp$ is a normal subgroup of $\mathcal{P}_n$. We call a generating set $E$ of $\mathcal{P}_n$ modulo $H^\perp$ an \emph{error basis}.

	\section{Results}
In this section we detail how to calculate the states $\rho_{AB}$ and $\rho_{B}$ under Pauli/depolarizing noise. First, we show that the marginal $\rho_{B}$ without noise can be interpreted in terms of deletion/contraction of the stabilizer group $H$ of $\rho_{AB}$, i.e.~we show that $\rho_{B} = \rho_{H\contract A}$\footnote{Note that the subscript on the left indicates the system on which the state lives, while on the right it indicates a code.}. Second, we study what happens when applying noise; we use the fact that $P\in \mathcal{P}_n$ maps $\rho_{H}$ to $\rho_{H'}$, where the $H'$ depends only on which coset of $H^\perp$ in $\mathcal{P}_n$ the element $P$ belongs to.   

 To start, note that without noise states $\rho_{AB}$ and $\rho_{A}/\rho_{B}$ are code projectors. This is trivially true for $\rho_{AB}$. For the marginals, we have the following result (see also~\cite{englbrecht2022transformations})
	
	\begin{prop}
		Let $\rho_{AB}$ be a pure stabilizer state with stabilizer group $\overbar{H}$. Then $\rho_{B}$ is the code projector with stabilizer group $\overbar{H}\contract A$.
	\end{prop}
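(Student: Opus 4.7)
The plan is to perform the partial trace over Alice's $k$ qubits directly on the canonical stabilizer expansion of the pure state. First, I would use the fact that for any pure stabilizer state with stabilizer group $\overbar{H}$ of order $2^{n+k}$,
\begin{equation}
\rho_{AB} = \frac{1}{|\overbar{H}|}\sum_{h \in \overbar{H}} h,
\end{equation}
since the right-hand side is the rank-one projector onto the unique stabilized vector (closure of $\overbar{H}$ under multiplication gives $h\cdot\frac{1}{|\overbar{H}|}\sum h' = \frac{1}{|\overbar{H}|}\sum h'$ for each $h\in\overbar{H}$).

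Next, I would take $\mathrm{Tr}_A$ term by term. Writing each $h \in \overbar{H}$ as $h = h_A \otimes h_B$ (absorbing any overall sign into $h_B$), and using that a Pauli on $k$ qubits is traceless unless it is the identity (in which case its trace is $2^k$), one obtains $\mathrm{Tr}_A(h) = 2^k\,\delta_{h_A, I}\, h_B$. Only those $h \in \overbar{H}$ whose $A$-component is trivial therefore contribute, and by Definition~\ref{def:contraction} these are precisely the elements of $\overbar{H}\contract A$ (after the $I$-entries on $A$ are stripped). Collecting constants,
\begin{equation}
\rho_B = \frac{2^k}{2^{n+k}}\sum_{h \in \overbar{H}\contract A} h = \frac{1}{2^n}\sum_{h \in \overbar{H}\contract A} h,
\end{equation}
which I would then identify as the code projector $\rho_{\overbar{H}\contract A}$, invoking the technical assumption that deletions preserve cardinality to align the $1/2^n$ prefactor with the normalization convention for $\rho_H$ introduced in the preliminaries.

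The calculation is immediate once the stabilizer expansion is in place; the only bookkeeping is tracking the phases inherited from $\overbar{H}$, which pick out the particular conjugate codespace on which $\rho_B$ is supported, consistently with Lemma~\ref{lemma:orthogonal_supp2}. I do not anticipate any genuine obstacle beyond this normalization/phase care, since the core content of the proposition is simply that Pauli tracelessness singles out the elements contracted on $A$.
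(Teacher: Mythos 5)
Your proposal is correct and follows essentially the same route as the paper's own proof: expand $\rho_{AB}$ as the normalized sum over $\overbar{H}$, trace out $A$ term by term using Pauli tracelessness, and observe that the surviving elements are exactly those picked out by Definition~\ref{def:contraction}. Your version is slightly more explicit about the $2^k$ normalization factor and the cardinality-preservation assumption, but there is no substantive difference in the argument.
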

	\begin{proof}
		Since $\rho_{AB}$ is a code projector, we have that $2^n\rho_{AB} = \sum_{h\in \overbar{H}}h$. Note that tracing out the $A$ system leaves only the terms that act as the identity on the $A$ system. This follows from linearity of the trace, that the trace is multiplicative over tensor products, and that $\Tr{I}=2,~\Tr{X}=\Tr{Y}=\Tr{Z}=0$. From Definition \ref{def:contraction} we find that this is the same thing as contracting the subset $A$ on $\overbar{H}$.
	\end{proof}
	The above proposition also generalizes to marginals of code projectors.

	From Lemma \ref{lemma:orthogonal_supp2} we find that applying Pauli noise to a code projector $\rho_{H}$ yields probabilistic mixtures of the different conjugate states $\rho_{H'}$\footnote{For non-trivial depolarizing each possible conjugate code has a non-zero probability mass. This need not be true for dephasing or general Pauli noise.}. Then, to fully characterize the final state, it suffices to calculate the probability $\textrm{Prob}\left(\rho_{H'}\right)$ with which each $\rho_{H'}$ appears when applying depolarizing noise to $\rho_{H}$. Let $C_e$ be the coset that takes $\rho_H$ to $\rho_{H'}$ (see Lemma~\ref{lemma:orthogonal_supp2}). We find 
	\begin{align}
		\textrm{Prob}(\rho_{H'})=&\sum_{\mathclap{\substack{P\in \mathcal{P}_n\\ \textrm{s.t. }P(\rho_H) = \rho_{H'}}}} \textrm{Prob}(P)\\
		=&\sum_{P\in C_e} \textrm{Prob}(P)\\
		=&\frac{1}{4^n}\sum_{P \in C_e} \left(1+3\lambda\right)^{n-\textrm{wt}\left(P\right)}\left(1-\lambda\right)^{\textrm{wt}\left(P\right)}\\
		=&\frac{1}{4^n}\mathcal{W}\hspace{-0.5mm}\left(C_e, 1+3\lambda, 1-\lambda\right)\label{eq:w_enum}\ .
	\end{align}
where we assume uniform depolarizing noise to express the final result in terms of weight enumerators.
	
	With the appropriate machinery in hand, it is straightforward to compute the entropy of a code projector that has undergone depolarizing noise. Since the state is a convex combination of orthogonal projectors, the associated probability distribution factorizes. That is, the probability distribution associated with an eigenstate of the noisy code projector can be thought of as first randomly selecting which one of the codespaces $H$ one belongs to (with probability $\textrm{Prob}\left(\rho_H\right)$), and then uniformly at random selecting one of the $2^k$ basis states of the code space.
	
	The associated entropy is thus given by
	\begin{align}
		\hat{S}(D(\rho_{H})) = k + S(\mathbf{s}_H)\label{eq:synd_entropy}\  ,	
	\end{align}
	where $D$ is the noise map, $S(\mathbf{s}_H)$ is the so-called \emph{syndrome entropy} of the code $H$~\cite{gallager1962low, grover2007upper, wadayama2006ensemble, matas2016analysis, li2004slepian}. That is, it is the entropy of the probability distribution associated with observing the different possible syndromes $\textbf{s}$. The syndrome entropy has been studied before in the context of classical codes, having appeared already in the original work of Gallagher on LDPC codes~\cite{gallager1962low}.
	
	From Eq.~\eqref{eq:synd_entropy} our main theorem follows.

 \begin{theorem}\label{theorem:main_theorem_all_noise}
 The coherent information of a stabilizer state across a bipartition $A, B$ that underwent uniform depolarizing noise is given by
	\begin{align}
		\hat{S}(\rho_{B}) - \hat{S}(\rho_{AB})=k + S(\mathbf{s}_{\overbar{H}\delete A }) - S(\mathbf{s}_{\overbar{H}})\ ,	
	\end{align}
	where $\overbar{H}$ is the stabilizer group of the state shared between Alice and Bob. The distributions $\mathbf{s}_{\overbar{H}\delete A}$ and $\mathbf{s}_{\overbar{H}}$ can be expressed in terms of weight enumerators.
 \end{theorem}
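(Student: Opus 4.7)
The plan is to apply Eq.~\eqref{eq:synd_entropy} separately to $\rho_{AB}$ and to $\rho_B$, and then subtract. I would begin with $\rho_{AB}$: as a pure stabilizer state it is trivially its own code projector, associated with $\overbar{H}$ viewed as an $[n+k,\,0,\,d']$ code. Substituting into $\hat{S}(D(\rho_H)) = k + S(\mathbf{s}_H)$ with $k_{\mathrm{code}}=0$ immediately yields $\hat{S}(D^{\otimes(n+k)}(\rho_{AB})) = S(\mathbf{s}_{\overbar{H}})$.

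For the marginal I would first note that depolarizing noise on Alice's qubits is trace-preserving and therefore commutes with the partial trace: $\mathrm{Tr}_A\bigl[(D^{\otimes k}_A\otimes D^{\otimes n}_B)(\rho_{AB})\bigr] = D^{\otimes n}_B(\rho_B)$. By the preceding proposition, $\rho_B = \rho_{\overbar{H}\contract A}$, i.e.\ the code projector of the $[n,\,k,\,d]$ stabilizer code $\overbar{H}\contract A$. A second application of Eq.~\eqref{eq:synd_entropy}, now with $k_{\mathrm{code}} = k$, gives $\hat{S}(D(\rho_B)) = k + S(\mathbf{s}_{\overbar{H}\contract A})$. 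The right-hand side is the entropy of the distribution over conjugate codes of $\overbar{H}\contract A$, which by Lemma~\ref{lemma:orthogonal_supp2} is indexed by cosets of its dual. Invoking the duality relation $(\overbar{H}\contract A)^\perp = \overbar{H}\delete A$ from Eq.~\eqref{eq:dual2} rewrites this syndrome distribution as $\mathbf{s}_{\overbar{H}\delete A}$, matching the form in the theorem. Subtracting the two entropies produces the stated identity.

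To exhibit the distributions explicitly, I would plug Eq.~\eqref{eq:w_enum} into every syndrome probability: for each coset $C_e$ (of $\overbar{H}$ inside $\mathcal{P}_{n+k}$, or of $\overbar{H}\delete A$ inside $\mathcal{P}_n$), the probability of the associated conjugate codespace is the weight-enumerator polynomial of $C_e$ evaluated at $(1+3\lambda,\,1-\lambda)$ and normalized by the appropriate power of $4$. Summing $-p\log p$ over cosets realizes the claim that both $\mathbf{s}_{\overbar{H}}$ and $\mathbf{s}_{\overbar{H}\delete A}$ can be expressed through weight enumerators.

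I do not anticipate a genuine obstacle. The only points that need care are dimensional bookkeeping — keeping $n+k$ total qubits, $n$ qubits on Bob's side, and $k$ logical degrees of freedom in the marginal code straight — and tracking that deletion and contraction swap roles under the dual, so that the conjugate-code distribution of $\overbar{H}\contract A$ is naturally labeled by $\overbar{H}\delete A$ in the final expression.
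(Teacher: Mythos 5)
Your proposal is correct and follows essentially the same route as the paper, which simply asserts that the theorem follows from Eq.~\eqref{eq:synd_entropy}: you apply that identity to $\rho_{AB}$ as a $[n+k,0,d']$ code projector and to $\rho_B=\rho_{\overbar{H}\contract A}$ as an $[n,k,d]$ code projector, and subtract. The details you supply --- that trace-preserving noise on $A$ commutes with $\mathrm{Tr}_A$, and that the syndrome distribution of $\overbar{H}\contract A$ is labeled by cosets of its dual $\overbar{H}\delete A$ via Eq.~\eqref{eq:dual2}, which explains the notation in the theorem --- are exactly the steps the paper leaves implicit.
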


\kg{In the preceding discussion, we interpreted the syndrome entropy of the code $\overbar{H}\delete A$. However, $\overbar{H}\delete A$ is not a genuine stabilizer group, as it is generally non-commutative. As a result, the syndrome entropy does not correspond to the entropy of a stabilizer code that can be experimentally realized. However, $\overbar{H}\delete A$ (and its syndrome entropy) can still be interpreted as arising from a classical linear code through the well-known correspondence between stabilizer codes and binary subspaces~\cite{calderbank1998quantum}.}

In the above we assumed the setting where all qubits underwent uniform depolarizing noise. Let us now focus on the setting where only Bob's qubits undergo uniform depolarizing noise; this setting corresponds exactly to the \emph{channel setting}, i.e.~Alice prepares $k$ maximally entangled states, and then encodes and transmits one half of her state to Bob. Using the contraction/deletion framework, we straightforwardly find the following symmetric expression for the associated coherent information.

 \begin{theorem}
 The coherent information of a stabilizer state across a bipartition $A, B$ that underwent uniform depolarizing noise on system $B$ is given by
	\begin{align}
		\hat{S}(\rho_{B}) - \hat{S}(\rho_{AB})=k + S(\mathbf{s}_{\overbar{H}\delete A }) - S(\mathbf{s}_{\overbar{H}\contract A})\ ,	
	\end{align}
	where $\overbar{H}$ is the stabilizer group of the state shared between Alice and Bob.
 \end{theorem}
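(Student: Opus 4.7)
The plan is to compute $\hat{S}(\rho_B)$ and $\hat{S}(\rho_{AB})$ separately and subtract, reusing the marginal analysis from Theorem~\ref{theorem:main_theorem_all_noise} for the Bob term and adapting the conjugate-code argument of Lemma~\ref{lemma:orthogonal_supp2} for the joint state. First I would observe that the marginal on Bob is unaffected by whether Alice's qubits are noisy, because depolarizing on $A$ is a trace-preserving map acting only on $A$ and so commutes with the partial trace on $A$. Consequently $\rho_B = \textrm{Tr}_A\bigl((I_A\otimes D_B)(\rho_{\overbar{H}})\bigr) = D_B(\rho_{\overbar{H}\contract A})$, which is the same state whose entropy is computed in the proof of Theorem~\ref{theorem:main_theorem_all_noise}; applying Eq.~\eqref{eq:synd_entropy} to the $[n,k,d]$ code $\overbar{H}\contract A$ (whose dual is $\overbar{H}\delete A$ by Eq.~\eqref{eq:dual2}) gives $\hat{S}(\rho_B) = k + S(\mathbf{s}_{\overbar{H}\delete A})$.

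The substantive new step is $\hat{S}(\rho_{AB})$. Starting from the pure code projector $\rho_{\overbar{H}}$ on $n+k$ qubits and applying only Bob-local depolarizing noise, Lemma~\ref{lemma:orthogonal_supp2} writes the result as a convex combination of conjugate code projectors $\rho_{(I_A\otimes P_B)(\overbar{H})}$, with $P_B\in\mathcal{P}_n$ drawn from the $n$-qubit uniform depolarizing distribution. Because $\rho_{\overbar{H}}$ corresponds to a pure state we have $\overbar{H}^\perp=\overbar{H}$, so each conjugate code is again a one-dimensional codespace; the resulting conjugate code projectors are therefore mutually orthogonal rank-one projectors and the von Neumann entropy of the mixture collapses to the Shannon entropy of the induced distribution over the visited conjugate codes.

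The key step is identifying that induced distribution with the coset distribution of $\overbar{H}\contract A$ in $\mathcal{P}_n$. By Lemma~\ref{lemma:orthogonal_supp2}, two Bob-local errors $I_A\otimes P_B$ and $I_A\otimes P_B'$ produce the same conjugate code iff the product $I_A\otimes P_BP_B'$ lies in $\overbar{H}^\perp=\overbar{H}$, i.e.\ iff there exists an element of $\overbar{H}$ with identity on $A$ and $P_BP_B'$ on $B$; by Definition~\ref{def:contraction} this is precisely $P_BP_B'\in\overbar{H}\contract A$. Hence the pushforward of the depolarizing distribution on $\mathcal{P}_n$ onto the set of visited conjugate codes is exactly the coset distribution of $\overbar{H}\contract A$ in $\mathcal{P}_n$, so $\hat{S}(\rho_{AB})=S(\mathbf{s}_{\overbar{H}\contract A})$. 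Subtracting yields the claimed identity.

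The main obstacle is this coset-identification step, where contraction on the $\rho_{AB}$ side enters asymmetrically with respect to the deletion appearing in the $\rho_B$ term; this asymmetry mirrors the duality swap in Eqs.~\eqref{eq:dual1}--\eqref{eq:dual2} that exchanges contraction and deletion under $(\cdot)^\perp$. Once the identification is in place, the rest is routine bookkeeping with Lemma~\ref{lemma:orthogonal_supp2} and Eq.~\eqref{eq:synd_entropy}, and the weight-enumerator expression for the distribution follows exactly as in Eq.~\eqref{eq:w_enum}.
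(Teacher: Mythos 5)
Your proposal is correct and follows essentially the same route as the paper: the Bob marginal is identified with the noisy code projector from the all-noise theorem, and the joint entropy is obtained by showing that Bob-local Pauli errors act on $\rho_{\overbar{H}}$ according to the cosets of $\overbar{H}\contract A$ in $\mathcal{P}_n$, exactly the equivalence-relation argument the paper uses. The only difference is that you spell out two steps the paper leaves implicit (the commutation of the $A$-local channel with $\Tr_A$, and the rank-one orthogonality of the conjugate projectors for a pure-state stabilizer group), which is harmless elaboration rather than a new approach.
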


\begin{proof}
Clearly the state $\rho_{B}$ is the same as in the setting of Theorem \ref{theorem:main_theorem_all_noise}, so it suffices to consider only the $\hat{S}(\rho_{AB})$ term.

As in Theorem \ref{theorem:main_theorem_all_noise}, the state $\rho_{AB}$ before noise is the code projector $\rho_{\overbar{H}}$. As before, noise on only the $B$ system will yield convex mixtures of states $P\rho_{H}P^\dagger$, but where the $P$ now vanish on $A$. Let us denote the set of such Pauli strings by $\mathcal{P}^B_{n+k}  $, which has a natural bijection to $\mathcal{P}_{n+k}\contract A\cong \mathcal{P}_{n}$.

As before, an equivalence relation $\sim$ on $\mathcal{P}^B_{n+k}$ indicates the subsets of Pauli strings that affect the state $\rho_{\overbar{H}}$ in the same manner. More specifically, two elements $P_1, P_2\in \mathcal{P}^B_{n+k}$ are equivalent iff there exists $h\in \overline{H}$ such that $P_1\cdot h = P_2$. Note that since $P_1, P_2$ vanish on $A$, so does $h$. Letting $\tilde{P}_1, \tilde{P}_2$ denote the image of $P_1, P_2$ under the above bijection, we find that $P_1 \sim P_2$ is equivalent to the statement that there exists a $\tilde{h}\in \overbar{H}\contract A$ such that $\tilde{P}_1\cdot \tilde{h} = \tilde{P}_2$, i.e.~the equivalence relation corresponds to the cosets of $\overbar{H}\contract A$ in $\mathcal{P}_{n}$.
\end{proof}

	\subsection{Algorithm}
	We now sketch our algorithm for calculating the coherent information of a stabilizer state that underwent uniform depolarizing noise. The goal is to construct an error basis $E$. $E$ is then used to reconstruct the cosets $C_e$, from which one can calculate the syndrome entropy through Eq.~\eqref{eq:w_enum}.

	Let us first focus on the $\rho_{AB}$ term. Assume that a stabilizer tableau $M$ (represented as an element of $\mathbb{F}_2^{(n+k) \times 2(n+k)}$) is given for the initial state $\ket{\psi}$. This matrix has rank $n+k$ over $\mathbb{F}_2$. Our goal is to find an error basis $E$, from which the cosets $C_e$ can be reconstructed. The error basis $E$ is of size $n+k$ and each of its elements is linearly independent from $M$.
 We construct such a basis $E$ as follows. Initialize $E$ as the empty set. Start with $M$ and check whether $e_1$ is in the row span of $M$, where $e_i$ is the $i$'th standard basis vector. It suffices to check whether $M$ and $M$ with the added row $e_1$ have the same rank over $\mathbb{F}_2$. If the rank increases, add $e_1$ to $E$ and add $e_1$ as a row vector to $M$. Now repeat the procedure but with $e_2$, and repeat until the matrix has maximum rank.

	The procedure is essentially the same for the $\rho_{B}$ term with the stabilizer tableau\footnote{Often the term stabilizer tableau is reserved for commutative subgroups, which we do not restrict ourselves to here.} associated with $\overbar{H}\delete A$ replacing the one associated with $\overbar{H}$. The new stabilizer tableau is the same as the original $M$, but with the columns indexed by $A$ removed. As before, one adds basis elements to the stabilizer tableau and error basis $E$ every time it is linearly independent from the rest.

  \kg{Let us consider the example seen in Fig.~\ref{fig:contraction_deletion_example}. Two natural stabilizer tableaux of $H$ and $H\delete 1$ are given by}

  \begin{align}
  M=
  \begin{bmatrix}
1 & 0 & 0 & 0 & 1 & 1\\
0 & 1 & 0 & 1 & 0 & 1\\
0 & 0 & 1 & 1 & 1 & 0\\
  \end{bmatrix}, \quad   M'=\begin{bmatrix}
0 & 0 & 1 & 1\\
1 & 0 & 0 & 1\\
0 & 1 & 1 & 0\\
  \end{bmatrix}\ \nonumber.
  \end{align}
  
\kg{Finding an error basis for $M$ is straightforward, since distinct $Z$-type operators map graph states to distinct orthogonal states. In other words, $E=\lbrace e_{1+3}, e_{2+3}, e_{3+3}\rbrace$. The same argument does not hold for $M'$, but fortunately the first pick $e_1$ suffices, i.e.~$E=\lbrace{e_1\rbrace}$. For simplicity, we calculate only the probability distribution over cosets corresponding to $M'$. That is, the probability distribution that has entropy $S(\mathbf{s}_{H\delete 1 })$.}

\kg{Writing out the full cosets as Pauli operators yields}

\begin{align}
\begin{array}{ccc}
\textrm{Row}(M') &\quad&  e_{1}+\textrm{Row}(M')\\
\downarrow  &\quad& \downarrow\\
\begin{bmatrix}
0 & 0 & 0 & 0\\
0 & 0 & 1 & 1\\
1 & 0 & 0 & 1\\
0 & 1 & 1 & 0\\
1 & 0 & 1 & 0\\
0 & 1 & 0 & 1\\
1 & 1 & 1 & 1\\
1 & 1 & 0 & 0\\
\end{bmatrix}\  &\quad& \begin{bmatrix}
1 & 0 & 0 & 0\\
1 & 0 & 1 & 1\\
0 & 0 & 0 & 1\\
1 & 1 & 1 & 0\\
0 & 0 & 1 & 0\\
1 & 1 & 0 & 1\\
0 & 1 & 1 & 1\\
0 & 1 & 0 & 0\\
\end{bmatrix} \\
\downarrow  &\quad& \downarrow\\
\begin{bmatrix}
II & YI\\
ZZ & IY\\
XZ & YY\\
ZX & XX\\
\end{bmatrix} &\quad& \begin{bmatrix}
XI & ZI\\
YZ & XY\\
IZ & ZY\\
YX & IX\\
\end{bmatrix}\\
\downarrow  &\quad& \downarrow\\
H\delete 1 &\quad& XI \cdot \left(H\delete 1\right)\ .
\end{array}
\end{align}

\kg{By construction, the two cosets only differ in their first entries. The weight distributions $\left(\mathcal{W}_0(C_e), \mathcal{W}_1(C_e), \mathcal{W}_2(C_e)\right)$ of the two cosets are given by $(1, 2, 5)$ and $(0, 4, 4)$, respectively. Using \eqref{eq:w_enum} and the fact that there are only two cosets, we find that the entropy of the reduced state after depolarizing is given by}

\begin{align}
S\left(D\left(\rho_{H\delete 1}\right)\right) =~& k + S(\mathbf{s}_{H\delete 1})\nonumber\\
=~&1+\overbar{S}\left(\frac{4\left(1+3\lambda\right)\left(1-\lambda\right) + 4\left(1-\lambda\right)^2}{4^2}\right)\nonumber \\
=~&1+\overbar{S}\left(\frac{1-\lambda^2}{2}\right)\nonumber \ .
\end{align}

	\subsection{Reduction for dephasing on graph states}
	
	Here we limit our discussion to graph states and dephasing on them, and show how the above connections reduce to classical linear codes. 
	
	Let us define dephasing with the following parameterization, $D_Z(\cdot ) = p(\cdot)+\left(1-p\right)Z(\cdot)$.

 \begin{theorem}
 Let $\ket{G}$ be a graph state whose biadjacency matrix across a bipartition $A, B$ is given by $G_{AB}$. The coherent information of $\ket{G}$ after uniform dephasing with parameter $p$ with respect to a bipartition $A, B$ is given by
\begin{align}
S(\rho_{B}) - S(\rho_{AB}) = k+S(\mathbf{s}_{G_{AB}})-\left(n+k\right)\overbar{S}(p) \ , 
\end{align}
where $S(\mathbf{s}_{G_{AB}})$ is the syndrome entropy of the linear code with generator matrix $G_{AB}$.
\end{theorem}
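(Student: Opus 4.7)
The plan is to rerun the proof of Theorem~\ref{theorem:main_theorem_all_noise} verbatim, but with the full Pauli group replaced by its $Z$-type subgroup $\mathcal{Z}_{n+k} = \langle Z_1, \dots, Z_{n+k}\rangle$, since $D_Z$ applies only $Z$-type errors. Lemma~\ref{lemma:orthogonal_supp2} still tells us that two such errors $Z^a$ and $Z^b$ act identically on a code projector $\rho_H$ precisely when $Z^{a+b} \in H^\perp$, so the equivalence classes of $Z$-noise are cosets of $H^\perp \cap \mathcal{Z}_{n+k}$ inside $\mathcal{Z}_{n+k}$. The resulting entropy is $k + S(\mathbf{s})$, where $\mathbf{s}$ is the distribution induced on these cosets; specializing this template to $\rho_{AB}$ and $\rho_B$ is what remains.

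First, for $\hat{S}(\rho_{AB})$ the state is pure so $k=0$, and I would show that the only $Z$-string in $\overbar{H}$ is the identity. Using the canonical graph-state generators $g_v = X_v\prod_{w\in N_v}Z_w$, any element of $\overbar{H}$ has the form $\prod_v g_v^{a_v}$ and carries an $X_v$ factor for every $v$ with $a_v = 1$, so a purely $Z$-type product forces $a_v = 0$ for all $v$. Consequently the $2^{n+k}$ states $Z^a\ket{G}$ are mutually orthogonal, their probabilities factorize as $\prod_i p^{1-a_i}(1-p)^{a_i}$, and the entropy of the noisy $\rho_{AB}$ is $(n+k)\,\overbar{S}(p)$.

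For $\hat{S}(\rho_B)$, the earlier proposition gives $\rho_B = \rho_{\overbar{H}\contract A}$, and dephasing on $A$ drops out of the partial trace. Applying the template yields $\hat{S}(\rho_B) = k + S(\mathbf{s})$ with $\mathbf{s}$ the distribution over cosets of $\overbar{H}\delete A \cap \mathcal{Z}_n$ inside $\mathcal{Z}_n$; here I use $(\overbar{H}\contract A)^\perp = \overbar{H}\delete A$, which follows from Eq.~\eqref{eq:dual2} combined with the self-duality $\overbar{H}^\perp = \overbar{H}$ of a stabilizer state. The main obstacle, and the only genuinely graph-theoretic step, is to identify this $Z$-subgroup with the classical linear code generated by $G_{AB}$. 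Expanding $h = \prod_v g_v^{a_v}$ and deleting the $A$-entries, each $v \in B$ with $a_v = 1$ still contributes an $X_v$, so a $Z$-only restriction forces $a_v = 0$ on all of $B$; the $Z$-content on $B$ of $\prod_{v \in A} g_v^{a_v}$ is then precisely $\prod_{w \in B} Z_w^{(G_{AB}\,a)_w}$. Hence the $Z$-strings in $\overbar{H}\delete A$ form the image of $G_{AB}$ viewed as a linear map $\mathbb{F}_2^A \to \mathbb{F}_2^B$, their cosets in $\mathcal{Z}_n$ are by definition the syndromes of this classical code, and the induced entropy is $S(\mathbf{s}_{G_{AB}})$. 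Subtracting the two entropies gives the claimed formula.
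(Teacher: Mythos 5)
Your proposal is correct and follows essentially the same route as the paper: both specialize the general coset/syndrome machinery to $Z$-type errors and identify the $Z$-strings of $\overbar{H}\delete A$ with the row space of $G_{AB}$, so that the error cosets become the syndromes of that classical code. The only cosmetic differences are that the paper obtains $\hat{S}(\rho_{AB})=(n+k)\overbar{S}(p)$ by commuting the dephasing through the $\CZ$ gates and reads off the code from the explicit post-deletion tableau, whereas you argue via orthogonality of the states $Z^a\ket{G}$ and work directly with the generators $X_v\prod_{w\in N_v}Z_w$ --- your version actually spells out steps (e.g., why edges internal to $A$ or $B$ drop out, and why the relevant cosets are those of $\overbar{H}\delete A\cap\mathcal{Z}_n$ in $\mathcal{Z}_n$) that the paper leaves implicit.
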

\begin{proof}
 Since $\CZ$ gates commute with $Z$ gates, the entropy of $\rho_{AB}$ is given by $\left(n+k\right)\cdot S(\lbrace p, 1-p\rbrace)\equiv \left(n+k\right)\cdot \overbar{S}(p)$. Further, we are free to remove local $\CZ$ gates. We thus restrict ourselves to bipartite graphs. 
 
 To continue, we need to construct the stabilizer tableau after deletion. We find that
 \begin{align}
\left[\begin{array}{c|c|c|c}&&&\\ I_k&\:0\:&\:0\:&G_{AB}\\
&&&\\
\hline
&&&\\
\:0 \:& I_{n} & G_{AB}^T & \:0\: \\
&&&\end{array}\right] \xrightarrow{\textrm{deletion}}\left[\begin{array}{c|c}0&G_{AB}\\\hline &  \\I_{n}&\:0\:\\& \end{array}\right],\ 
 \end{align}
	where $G_{AB}$ is the biadjacency matrix across the bipartition.

 Now, only $Z$ errors occur, which act on the $G_{AB}$ part. This means that $G_{AB}$ can be interpreted as the generator matrix of a classical code. As before, we are interested in the probability distribution over the cosets/syndromes.  In this case $S(\mathbf{s})$ reduces to the syndrome entropy of the classical linear code with generator matrix $G_{AB}$.
 \end{proof}

 \begin{corollary}
 Let $\ket{G}$ be a graph state whose biadjacency matrix across a bipartition $A, B$ is given by $G_{AB}$. The coherent information of $\ket{G}$ after uniform dephasing on the $B$ system with parameter $p$ with respect to a bipartition $A, B$ is given by
\begin{align}
S(\rho_{B}) - S(\rho_{AB}) = k+S(\mathbf{s}_{G_{AB}})-n\cdot \overbar{S}(p) \ .
\end{align}
\end{corollary}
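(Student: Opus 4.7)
The plan is to modify the proof of the preceding theorem only in the computation of $S(\rho_{AB})$, since $S(\rho_B)$ is insensitive to whether Alice's qubits are dephased. Specifically, Alice's dephasing channel $D_Z^{\otimes k}$ is trace-preserving and acts only on her side, so it commutes with the partial trace over $A$; consequently $\rho_B$ is the same state in the asymmetric-noise setting as in the previous theorem. The same deletion/coset argument used there then contributes $k + S(\mathbf{s}_{G_{AB}})$ to the coherent information.

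For the $S(\rho_{AB})$ term, the idea is to push the dephasing through the $CZ$ gates. Since $CZ$ commutes with $Z$ on either of its qubits, applying $D_Z^{\otimes n}$ on Bob's qubits after the layer of $CZ$ gates is equivalent to applying the same channel to Bob's $n$ initial $\ket{+}$ states and then applying the (unitary, hence entropy-preserving) $CZ$ gates. The resulting pre-$CZ$ state is a product state: Alice holds $\ket{+}\bra{+}^{\otimes k}$ contributing zero entropy, while Bob holds $\bigl(p\ket{+}\bra{+} + (1-p)\ket{-}\bra{-}\bigr)^{\otimes n}$, which contributes exactly $n\,\overbar{S}(p)$ to the entropy. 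Combining the two pieces yields the claimed expression $k + S(\mathbf{s}_{G_{AB}}) - n\,\overbar{S}(p)$.

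I do not foresee any real obstacle: both key facts---that $\rho_B$ is untouched by Alice-side dephasing, and that $CZ$ commutes with $Z$---are the same ingredients used in the proof of the preceding theorem, so the corollary is essentially a bookkeeping variant in which the $k$ noiseless qubits on Alice's side simply drop out of the $S(\rho_{AB})$ contribution.
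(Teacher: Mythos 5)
Your proposal is correct and follows essentially the same route as the paper: the paper's own justification for the corollary is exactly the observation that $\rho_B$ is unchanged (so the $k+S(\mathbf{s}_{G_{AB}})$ term carries over from the theorem) while commuting the $Z$-dephasing through the $\CZ$ gates now yields only $n$ dephased $\ket{+}$ states, giving $S(\rho_{AB})=n\,\overbar{S}(p)$. No gaps.
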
 

The coherent information for dephasing noise can be calculated using a similar technique as for depolarizing noise.

	We close with two comments. First, assume the graph state in question has been created by first preparing $\ket{+}^{\otimes n}$ and then applying $\CZ$ gates. We can include noisy preparation of the $\ket{+}^{\otimes n}$, where we can include both depolarizing and dephasing noise. This follows from the fact that depolarizing acts as dephasing on $\ket{+}^{\otimes n}$, and that $Z$ gates commute with $\CZ$ gates.
	Second, we can also include depolarizing noise after the $\CZ$ gates has been applied. Let $Q_1(s)$ and $Q_2(s)$ be the probabilities for seeing a syndrome $s$ for dephasing and depolarizing noise separately. Then the probability $Q_3(s)$ of seeing a syndrome $s$ for the combined models is given by 
	\begin{align}
		Q_3(s) = \sum_{\mathclap{\substack{s_1,s_2\\ \textrm{s.t. }s_1\oplus s_2 = s}}}Q_1\left(s_1\right)\cdot Q_2\left(s_2\right) \ .
	\end{align}
	That is, interpreting $Q_1$, $Q_2$ as functions on $\mathbb{Z}_2^{n-k}$, $Q_3$ is the convolution of $Q_1$ and $Q_2$ with respect to the group $\mathbb{Z}_2^{n-k}$. One can also apply the associated Fourier transform (the Hadamard Transform) to the $Q_1$ and $Q_2$, multiply them pointwise, and perform the inverse Fourier transform~\cite{pommerening2005fourier}. We do not report any numerics for the case of both dephasing and depolarization.

	\section{Numerical results}
	We plot the coherent information for several noisy stabilizer states in Fig.~\ref{fig:compdepol}. We consider here the graph states associated with a single Bell pair, the repetition code on seven qubits, the 5-qubit code, the $\left[4,2,2\right]$ code and the $\left[8, 3, 3\right]$ code (under uniform noise on all qubits). As noted before, several choices of graph states lead to equivalent projectors, but the syndrome entropy of the total graph state may differ. We show the two graphs that we consider for the $\left[4,2,2\right]$ code in Fig.~\ref{fig:422}.
Finally, for the $\left[8, 3, 3\right]$ code~\cite{gottesman1996class} we use the graph state from~\cite{cafaro2014scheme}).

 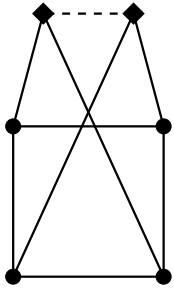
\begin{figure}[h!]
\centerfloat
% !TEX root = main.tex

\begin{tikzpicture}

\node[circle, fill=black, draw, scale=0.6] (a) at (-1,+1){};
\node[circle, fill=black, draw, scale=0.6] (b) at (+1, +1){};
\node[circle, fill=black, draw, scale=0.6] (c) at (+1, -1){};
\node[circle, fill=black, draw, scale=0.6] (d) at (-1, -1){};

\node[diamond, fill=black, draw, scale=0.6] (d1) at (0.6, 2.5){};
\node[diamond, fill=black, draw, scale=0.6] (d2) at (-0.6, 2.5){};

% \node[scale=1] (d) at (-1-0.3, +1+0.3){$1$};
% \node[scale=1] (d) at (+1+0.3, +1+0.3){$4$};
% \node[scale=1] (d) at (-1-0.3, -1-0.3){$2$};
% \node[scale=1] (d) at (+1+0.3, -1-0.3){$3$};

\draw[line width = 0.3mm] (1,1) -- (1,-1) -- (-1,-1) -- (-1,+1) -- cycle;

\draw[line width = 0.3mm] (-0.6, 2.5) -- (1,-1);
\draw[line width = 0.3mm] (-0.6, 2.5) -- (-1,1);

\draw[line width = 0.3mm] (0.6, 2.5) -- (-1,-1);
\draw[line width = 0.3mm] (0.6, 2.5) -- (1,1);

\draw[line width = 0.3mm, dashed] (0.6, 2.5) --  (-0.6, 2.5);

% \draw[line width = 0.3mm] (-0.6, 2.5) -- (1,-1);
% \draw[line width = 0.3mm] (-0.6, 2.5) -- (1,1);

% \draw[line width = 0.3mm] (0.6, 2.5) -- (-1,-1);
% \draw[line width = 0.3mm] (0.6, 2.5) -- (1,-1);

% \draw[line width = 0.3mm] (0.6, 2.5) -- (-0.6,2.5);

% \draw[line width = 0.3mm] (-1,+1) -- (+1,+1) -- (+1,-1) -- (-1,-1) -- (-1, +1);
% \draw[line width = 0.3mm] (-1,+1) -- (+1,-1);
\end{tikzpicture}
\caption{The two graphs used associated with the $\left[4, 2, 2\right]$ code, where they only differ in the dashed line. Both states have the same marginal on the non-diamond qubits, but differ in the complement marginal.}
\label{fig:422}
\end{figure}

We observe that better codes lead to more robust states. Note that the two states associated with the $\left[4, 2, 2\right]$ code have the same reduced entropies but differ in their syndrome entropy of the $\left[n+k, 0, d\right]$ code associated with the total graph state.
	
\begin{figure}
\centerfloat
\hspace{2mm}
	\includegraphics[clip,  width=0.50\textwidth, trim = 17mm 0mm 0mm 0mm]{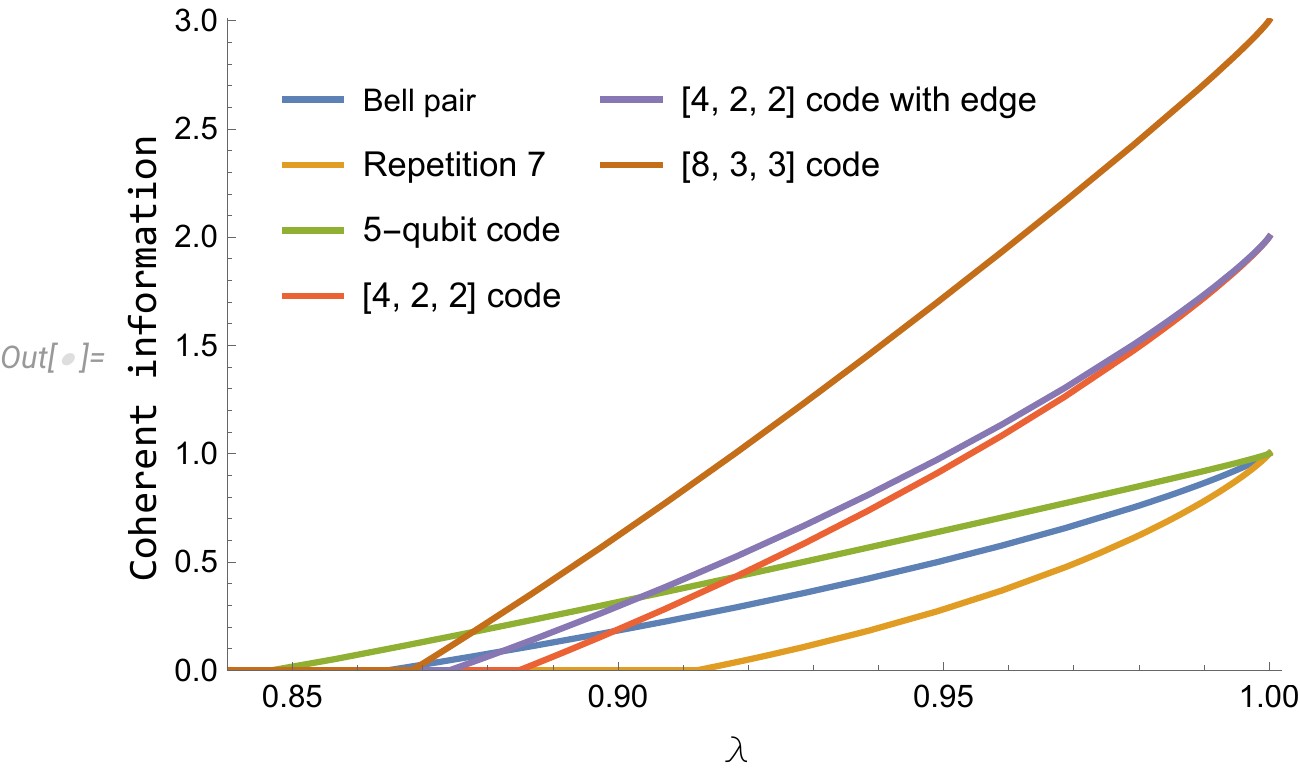}
	\vspace*{-3.3mm}	\caption{Comparison of the coherent information for several different graph states under depolarizing noise.}
	\label{fig:compdepol}
\end{figure}

In Fig.~\ref{fig:compdeph} we compare the coherent information for several different graph states under dephasing noise. We see that the repetition codes can achieve non-zero coherent information for larger values of dephasing parameter, $p$ (due to the fact that the distance increases), similar to the results from~\cite{sajjad2024lower}. On the other hand the coherent information is small even for little dephasing (i.e.~$p$ large), since repetition codes have $k=1$.

\begin{figure}
\centerfloat
	\includegraphics[clip,  width=0.5\textwidth, trim = 11mm 0mm 0mm 0mm]{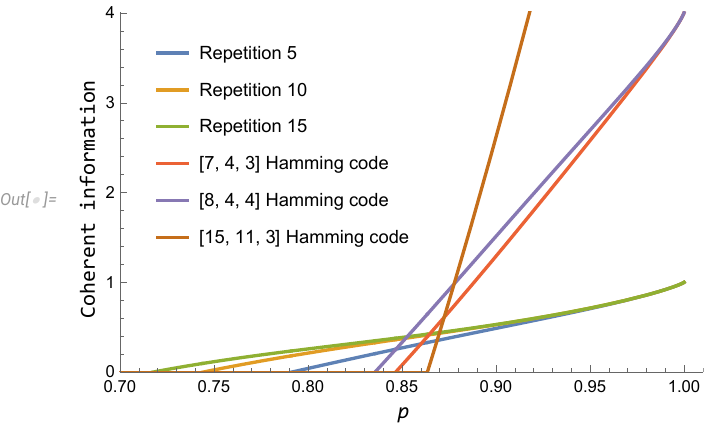}
	\vspace*{-2.3mm}
	\caption{Comparison of the coherent information for several different graph states under dephasing noise.}
	\label{fig:compdeph}
\end{figure}

We see that the robustness of stabilizer states is correlated with how good the associated codes are, for both dephasing and depolarizing noise.

	\section{Discussion}
	In this paper we interpreted noisy stabilizer states in terms of certain associated stabilizer codes. This suggests that stabilizer states robust against noise/distribution in networks have good codes associated to them; we confirmed this intuition for a number of codes. \kg{However, there seems no structural way to find the state with the largest coherent information for fixed $n$, $k$ and noise parameter $\lambda$. This should come as no surprise, since finding optimal codes for any metric is a difficult task.}

 \kg{On a higher level}, we have shown that the spectra of noisy stabilizer states can be expressed in terms of weight enumerators, quantities that are known to be hard to compute (even in the classical case) but have been well-studied~\cite{hill1986first, rains1998quantum, chauhan2021hamming}.
	Fortunately bounds on the syndrome entropy have been investigated for the classical case; it would be of interest to see if those results can be exploited for the case considered here. This would however require a good enough control on the error terms, since the coherent information involves the difference of two syndrome entropies.

The techniques developed here apply straightforwardly to the multipartite setting. One possible application would be to express upper bounds~\cite{goodenough2016assessing, bennett2014quantum} on the multipartite squashed entanglement~\cite{yang2009squashed} to find upper bounds on the multipartite distillable key.
	
\section{Acknowledgements}
The authors would like to extend thanks to Dion Gijswijt for discussions on contractions, deletions and related topics.
This work was funded by the Army Research Office
(ARO) MURI on Quantum Network Science under grant
number W911NF2110325.

	% \IEEEtriggeratref{4}
	
	\bibliographystyle{IEEEtran}
	\bibliography{references}

\appendix

\section{Proof of Lemma \ref{lemma:surjectivity}}\label{sec:surjectivity}
We give here a proof of Lemma \ref{lemma:surjectivity_lemma}. It is more convenient to work with the dual $H^\perp$ instead of $H$ itself. Lemma \ref{lemma:surjectivity_lemma} can then be rephrased as follows.

 \begin{lemma}\label{lemma:surjectivity_lemma2}
 For every dual $H^\perp$ of a stabilizer group $H$ on a set of qubits $V$, there exists a stabilizer group $\overbar{H}$ of a stabilizer state on a set of qubits $V' \cup V$ such that $H^\perp = \overbar{H}\delete V'$.
 \end{lemma}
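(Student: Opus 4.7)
The plan is to build $\overbar{H}$ explicitly as the stabilizer group of the canonical purification of the maximally mixed state on the code associated with $H$. Let $|V|=n$ and $|H|=2^{n-k}$, so that $|H^\perp|=2^{n+k}$. Introduce a fresh register $V'$ of $k$ ancilla qubits, so that a pure stabilizer state on $V\cup V'$ will have stabilizer group of size exactly $2^{n+k}$, matching $|H^\perp|$ as required by the sizing assumption on deletions.

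First I would choose a generating set $\{g_1,\ldots,g_{n-k}\}$ of $H$ together with a symplectic basis of representatives $\{\overbar{X}_1,\overbar{Z}_1,\ldots,\overbar{X}_k,\overbar{Z}_k\}$ for $H^\perp/H$. These representatives exist and can be chosen to satisfy the canonical Pauli commutation relations (each $\overbar{X}_j,\overbar{Z}_j$ commutes with every $g_i$ by definition of $H^\perp$; among the representatives, $\overbar{X}_i$ and $\overbar{Z}_j$ anticommute iff $i=j$ and all others commute). This is the standard symplectic Gram--Schmidt procedure on $H^\perp/H$ viewed as a symplectic $\mathbb{F}_2$-vector space of dimension $2k$.

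Next I would define $\overbar{H}$ to be the group generated by
\begin{align}
\{g_i\otimes I_{V'}\}_{i=1}^{n-k}\;\cup\;\{\overbar{X}_j\otimes X_j'\}_{j=1}^{k}\;\cup\;\{\overbar{Z}_j\otimes Z_j'\}_{j=1}^{k},
\end{align}
where $X_j',Z_j'$ are the single-qubit Paulis on the $j$-th ancilla of $V'$. A short check shows that these $n+k$ generators mutually commute: the $g_i\otimes I_{V'}$ commute with everything by choice of logical representatives; and for two of the mixed generators, either both tensor factors commute (when the ancilla indices differ) or both anticommute (when they agree), so the tensor product commutes in either case. Hence $\overbar{H}$ is the stabilizer group of a pure stabilizer state on $V\cup V'$, since it has $n+k$ independent commuting generators and does not contain $-I$ (the ancilla factor on any nontrivial element has no $-I$ possibility either).

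Finally I would verify $\overbar{H}\delete V'=H^\perp$. Every generator, after deletion of $V'$, becomes one of $g_i$, $\overbar{X}_j$, or $\overbar{Z}_j$, and these generate $H^\perp$; conversely every element of $\overbar{H}$ restricts on $V$ to a product of the same three families, which lies in $H^\perp$. So the two groups agree as sets on $V$, and by the sizing assumption the cardinalities match, giving equality. The main obstacle is really only step two, constructing the symplectic representatives with the canonical commutation pattern; once this standard fact is in hand, the remaining verifications are mechanical.
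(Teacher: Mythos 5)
Your construction is essentially identical to the paper's: the paper likewise extends a generating set $\lbrace x_1,\ldots,x_k, x_{k+1},\ldots,x_n, z_1,\ldots,z_k\rbrace$ of $H^\perp$ (with $x_{k+1},\ldots,x_n$ generating $H$ and the canonical anticommutation pattern) by attaching $X$ on the $i$'th ancilla to $x_i$ and $Z$ on the $i$'th ancilla to $z_i$ for $i\leq k$, and identity to the generators of $H$. The only cosmetic difference is that you justify the symplectic generating set via Gram--Schmidt on $H^\perp/H$ while the paper invokes transitivity of the symplectic group action; both are standard and the rest of the verification matches.
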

 \begin{proof}

Note that $H^\perp$ can always be chosen to have a generating set of the form $\lbrace x_{1}, x_2, \ldots,x_{k}, x_{k+1}, \ldots, x_{n}, z_{1},z_{2}\ldots, z_{k}\rbrace$, such that $\lbrace x_{k+1}, x_{k+2}, \ldots, x_{n}\rbrace$ generates $H$. Here, the labels are chosen such that two elements anti-commute iff they are of the form $x_i$ and $z_j$ and $i=j$. One way of seeing this is by choosing a `canonical' stabilizer group~\cite{jansen2022enumerating} $F = \langle Z_{k+1}, Z_{k+2}, \ldots, Z_{n}\rangle$, noting that $F^\perp = \langle Z_1, Z_2, \ldots, Z_{k}, Z_{k+1}, \ldots Z_{n}, X_{1},X_{2}, \ldots, X_{k}\rangle$, and that the symplectic group acts transitively on the set of all stabilizer groups. An example is given by the groups in Fig.~\ref{fig:contraction_deletion_example}, where $H\contract 1 = \langle YY\rangle = \langle x_2\rangle$ and $\left(H\contract 1\right)^\perp = H\delete 1 = \langle XX, YY, ZX\rangle = \langle x_1, x_2, z_1\rangle$.

We now construct a generating set for $\overbar{H}$ from $\lbrace x_{1}, x_2, \ldots,x_{k}, x_{k+1}, \ldots, x_{n}, z_{1},z_{2}\ldots, z_{k}\rbrace$, by mapping each element $x_{i}$ ($z_{i}$) to some $\overbar{x}_i$ ($\overbar{z}_{i}$). Since $H=\langle x_{k+1}, x_{k+2}, \ldots, x_n \rangle$, we require that $\overbar{x}_{i}$ is the same as $x_i$, except with identity operators on the $k$ new qubits. For our example, we see that $\overbar{x}_2 = IYY$. 

For $\overbar{x}_i$ with $1\leq i \leq k$, we attach a single $X$ operator to the $i$'th new entry and identity everywhere else. We do the same for $\overbar{z}_i$ and $1\leq i \leq k$, but where we attach a single $Z$ operator to the $i$'th new entry instead of an $X$.
The new $\overbar{x}_{i}$ and $\overbar{z}_{i}$ are independent and commute pairwise. Furthermore, the group generated by the $\overbar{x}_i$ and $\overbar{z}_i$ has size $2^{n+k}$, and each of its elements has length $n+k$. This implies that the group generated by the $\overbar{x}_i$ and $\overbar{z}_i$ forms the stabilizer group of a state. Finally, by construction it is clear they generate a group that under deletion yields $H$.

For our example, we find that $\overbar{x}_1 = XXX$ and $\overbar{z}_1 = ZZX$, which yields the stabilizer group of the graph state $\ket{K_3}$. Note that a different choice of $x_1, x_2, z_1$ would have yielded a different stabilizer group $H$.
\end{proof}

\end{document}